\newcommand{\interestahead}{*}
\newcommand{\interestdhead}{square}
\newtheorem{definition}{Definition}
\newtheorem{proposition}{Proposition}
\newcommand{\set}[1]{\ensuremath{\left\{#1\right\}}} 
\newcommand{\comp}{\mathbin{;}}
\newcommand{\audita}[1]{\ensuremath{#1^\oplus}}
\newcommand{\auditd}[1]{\ensuremath{#1^\ominus}}
\newcommand{\interesta}{\ensuremath{i^\oplus}}
\newcommand{\interestb}{\ensuremath{i^\ominus}}
\newcommand{\setinterest}{\ensuremath{\set{i^\oplus,i^\ominus}}}
\newcommand{\pa}{\mathit{PA}}
\renewcommand{\mp}{\mathit{MP}}
\newcommand{\pd}{\mathit{PD}}
\begin{document}

\title{Caching and Auditing in the RPPM Model}

\author{Jason Crampton}
\author{James Sellwood}
\affil{Information Security Group\\ Royal Holloway, University of London}

\maketitle
\begin{abstract}
    Crampton and Sellwood recently introduced a variant of relationship-based access control based on the concepts of \underline{r}elationships, \underline{p}aths and \underline{p}rincipal \underline{m}atching, to which we will refer as the RPPM model.
    In this paper, we show that the RPPM model can be extended to provide support for caching of authorization decisions and enforcement of separation of duty policies.
    We show that these extensions are natural and powerful.
    Indeed, caching provides far greater advantages in RPPM than it does in most other access control models and we are able to support a wide range of separation of duty policies.
\end{abstract}

\section{Introduction}
Whilst the majority of computer systems employ some form of role-based access control model, social networking sites have made use of the relationships between individuals as a means of determining access to resources.
Recent work on relationship-based access control models has attempted to further develop this concept but has frequently remained focused on the relationships that exist between individuals~\cite{CarminatiFP09,Fong11}.
Crampton and Sellwood define a more general model for access control utilising relationships between entities, where those entities can represent any physical or logical component of a system~\cite{CramptonS14}.
These entities and their (inter-)relationships are described by a multigraph, called the \emph{system graph}.
Authorization requests in the RPPM model are processed by first determining a list of matching principals.
This list of principals is identified using principal-matching rules and the system graph.
Once a list of matched principals is determined, the specific action in the request is authorized or denied based on authorization rules defined for those principals and the object.

The RPPM model provides the necessary foundations for general purpose relationship-based access control systems, but there are a number of simple enhancements which would greatly increase its utility and efficiency.
The evaluation of path conditions can be complex in system graphs containing many nodes of high degree.
Support for caching of previously matched principals would significantly reduce the processing necessary during the evaluation of an authorization request.
The introduction of caching support is, therefore, our first enhancement.

Our second enhancement adds support for request evaluation audit records to be kept, such that future authorization requests may be evaluated both on the current relationships within the system graph but also using historical information about past actions by subjects.
Such mechanisms allow us to support constraints such as separation of duty and Chinese Wall policies and lay a foundation for future work on workflow authorization using the model.

The rest of this paper is arranged with background information on the RPPM model provided in Section~\ref{sec:preliminaries} and then the two enhancements described individually in Sections~\ref{sec:caching} (caching) and~\ref{sec:audit} (audit records).
We discuss related work in Section~\ref{sec:relatedwork} and draw conclusions of our contributions and identify future work in Section~\ref{sec:conclusion}.

\section{The RPPM Model}\label{sec:preliminaries}
The RPPM model, described in detail in~\cite{CramptonS14}, employs a system graph to capture the entities of a system and their (inter-)relationships. The entities (physical or logical system components) are nodes within the system graph whilst the relationships are labelled edges. The system graph's `shape' is constrained by a system model, which identifies the types of entities and relationship which are supported. It does so by defining a permissible relationship graph whose nodes are the possible types of entities in the system graph and whose labelled edges indicate the relationships which \emph{may} exist in the system graph between entities of the connected types.

\begin{definition}
    A \emph{system model} comprises a set of types $T$, a set of relationship labels $R$, a set of \emph{symmetric} relationship labels $S \subseteq R$ and a \emph{permissible relationship graph} $G_{\textrm{PR}} = (V_{\textrm{PR}},E_{\textrm{PR}})$, where $V_{\textrm{PR}} = T$ and $E_{\textrm{PR}} \subseteq T \times T \times R$.
\end{definition}

\begin{definition}
    Given a system model $(T,R,S,G_{\textrm{PR}})$, a \emph{system instance} is defined by a \emph{system graph} $G = (V,E)$ where $V$ is the set of entities and $E \subseteq V \times V \times R$.
    Making use of a mapping function $\tau : V \rightarrow T$ which maps an entity to its type, we say $G$ is \emph{well-formed} if for each entity $v$ in $V$, $\tau(v) \in T$, and for every edge $(v,v',r) \in E$, $(\tau(v),\tau(v'),r) \in E_{\textrm{PR}}$.
\end{definition}

Within the RPPM model, authorization requests have the form $q = (s, o, a)$, where a subject $s$ requests authorization to perform action $a$ on target object $o$.
The authorization policy is abstracted away from subjects by the use of security principals.
These principals are matched to requests through the satisfaction of path conditions using edges in the system graph, where a path condition $\pi$ represents a sequence of relations with specific labels from the set $R$.

\begin{definition}\label{def:path-condition}
   Given a set of relationships $R$, we define a \emph{path condition} recursively:
    \begin{itemize}
        \item $\diamond$ is a path condition;
        \item $r$ is a path condition, for all $r \in R$;
        \item if $\pi$ and $\pi'$ are path conditions, then $\pi \comp \pi'$, $\pi^+$ and $\overline{\pi}$ are path conditions.
    \end{itemize}
   A path condition of the form $r$ or $\overline{r}$, where $r \in R$, is said to be an \emph{edge condition}.
\end{definition}

Informally, $\pi \comp \pi'$ represents the concatenation of two path conditions; $\pi^+$ represents one or more occurrences, in sequence, of $\pi$; and $\overline{\pi}$ represents $\pi$ reversed; $\diamond$ defines an ``empty'' path condition.

\begin{definition}
   Given a set of relationships $R$, we define a \emph{simple path condition} recursively:
    \begin{itemize}
        \item $\diamond$, $r$ and $\overline{r}$, where $r \in R$, are simple path conditions;
        \item if $\pi \ne \diamond$ and $\pi' \ne \diamond$ are simple path conditions, then $\pi \comp \pi'$ and $\pi^+$ are simple path conditions.
    \end{itemize}
\end{definition}

A path condition can describe highly complex and variable-length paths within the system graph.
However, Crampton and Sellwood proved that every path condition can be reduced to an equivalent simple path condition~\cite[\S{2.2}]{CramptonS14}, thereby simplifying the design of the principal-matching algorithm.

\begin{definition}\label{def:path-condition-semantics}
    Given a system graph $G = (V,E)$ and $u,v \in V$, we write $G,u,v \models \pi$ to denote that  $G$, $u$ and $v$ \emph{satisfy path condition} $\pi$.
    Formally, for all $G,u,v,\pi,\pi'$:
    \begin{itemize}
        \item $G,u,v \models \diamond$ iff $v = u$;
        \item $G,u,v \models r$ iff $(u,v,r) \in E$;
        \item $G,u,v \models \pi \comp \pi'$ iff there exists $w \in V$ such that $G,u,w \models \pi$ and $G,w,v \models \pi'$;
        \item $G,u,v \models \pi^+$ iff $G,u,v \models \pi$ or $G,u,v \models \pi \comp \pi^+$;
        \item $G,u,v \models \overline{\pi}$ iff $G,v,u \models \pi$.
    \end{itemize}
\end{definition}

\begin{definition}
    Let $P$ be a set of authorization principals.
    A \emph{principal-matching rule} is a pair $(\pi,p)$, where $\pi$ is a path condition and $p \in P$ is the associated principal.
    A list of principal-matching rules is a \emph{principal-matching policy}.
\end{definition}

In the context of a principal-matching rule, a path condition is called the \emph{principal-matching condition}.

The request and system graph are evaluated against the principal-matching policy utilising a \emph{principal-matching strategy} (PMS) to determine the list of matched principals for the request.
The PMS specifies how principal-matching rules should be evaluated, for example whether the first matching principal applies (in the case of the \textsf{FirstMatch} PMS) or whether all matching principals apply (\textsf{AllMatch}).
A \emph{default} principal-matching rule $(\top, p')$ may, optionally, be employed as the last rule in the policy and will automatically result in its principal $p'$ being matched whenever the rule is evaluated.

A system graph $G$, two nodes $u$ and $v$ in $G$, a principal-matching policy $\rho$, and a principal-matching strategy $\sigma$ determines a list of principals $\mp$ associated with the pair $(u,v)$.
We evaluate each principal-matching rule $(\pi,p)$ in turn and add $p$ to the list of matched principals if and only if $G,u,v \models \pi$.
We then apply the principal-matching strategy to the list of matched principals to obtain $\mp$.
(Obviously, optimizations are possible for certain principal-matching strategies.)
We write $G,u,v \xrightarrow{\rho,\sigma} \mp$ to denote this computation.

Once determined, the list of matched principals is used to identify relevant authorization rules in the authorization policy.

\begin{definition}
    An \emph{authorization rule} has the form $(p,o,a,b)$, where $p$ is a principal, $o$ is an object, $a$ is an action and $b \in \set{0,1}$, where $b = 0$ denies the action and $b = 1$ grants the action.
    In order to ease authorization policy specification we allow for the use of $\star$ instead of $o$, to represent all objects, or instead of $a$, to represent all actions.
    These global authorization rules, therefore, have the form $(p,\star,a,b)$, $(p,o,\star,b)$ or $(p, \star, \star, b)$.
    An \emph{authorization policy} is a list of authorization rules.
\end{definition}

The matching of principals to authorization rules yields a list of authorization decisions, which is reduced to be single decision using a \emph{conflict resolution strategy} (CRS).
The CRS is used in much the same way as a rule-combining or policy-combining algorithm is used in XACML.
It may specify that particular outcomes are prioritised, such as (\textsf{AllowOverride} or \textsf{DenyOverride}), or that the first conclusive decision should be used (\textsf{FirstMatch}).

To summarise, given a request $(s,o,a)$, where $s$ and $o$ are nodes in the system graph and $a$ is an action, we first compute the list of matched principals $G,s,o \xrightarrow{\rho,\sigma} \mp$.
We then use $\mp$ and the authorization policy to determine which actions are granted and denied for those principals and apply the CRS to determine a final decision.
In this paper, we assume the use of the \textsf{AllMatch} PMS and \textsf{DenyOverride} CRS throughout.

\section{Caching}\label{sec:caching}
The most complex part of evaluating an authorization request in the RPPM model is the principal matching stage~\cite[\S{3}]{CramptonS14}.
This process attempts to satisfy path conditions within principal-matching rules using paths between the subject and the object of the request.
It is important to note that the requested action is immaterial during this processing stage (only becoming relevant when the authorization rules are considered).
The list of matched principals for a subject-object pair remains static until a change is made to the system graph or certain associated policy components.
Even then, not all of the possible changes would impact the matched principals between a particular subject and object.

We introduce the concept of \emph{caching edges} and make use of the relative stability of matched principals in order to reduce the processing required for future authorization requests.
We first redefine the system graph to support these new edges.
In particular, when we evaluate a request $(s,o,a)$ that results in a list of matched principals $\mp$, we add an edge $(s,o,\mp)$ to the system graph, directed from $s$ to $o$ and labelled with $\mp$.

%

Informally, a caching edge $(s,o,\mp)$ directly links $s$ to $o$ and identifies the matching principals $\mp$ relevant to requests of the form $(s,o,a)$.
The processing of subsequent authorization requests can skip the principal matching stage and use $\mp$ in conjunction with the authorization rules to evaluate a request of the form $(s,o,a)$.

To illustrate, consider the simple system graph $G_1$, shown in Figure~\ref{img:system-graph-fragment1:a}, and the following principal-matching and authorization policies
\begin{align*}
 \rho &= [(r_1, p_1),(r_2, p_2),(r_3, p_3),(r_1 \comp r_3, p_4),(r_2 \comp r_3, p_5)] \\
 \pa &= [(p_5,\star,a_1,1),(p_5,\star,a_2,0)].
\end{align*}

If an authorization request $q_1 = (v_2,v_4,a_1)$ is made, then \mbox{$G_1,v_2,v_4 \xrightarrow{\rho,\sigma} [p_5]$}, because the only principal-matching condition from the policy which can be satisfied between $v_2$ and $v_4$ in $G_1$ is $ r_2 \comp r_3$.
Then the authorization rule $(p_5,\star,a_1,1)$ applies and the set of possible decisions $\pd = \set{1}$; thus the request is authorized.
At this stage we may add a caching edge $(v_2,v_4,[p_5])$ to produce the system graph shown in Figure~\ref{img:system-graph-fragment1:b}.
We use the convention that caching edges have a diamond-shaped arrow head.

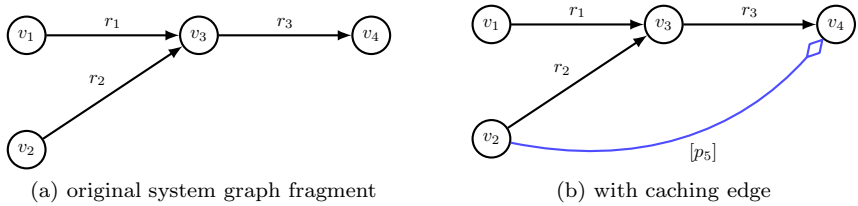
\begin{figure}[!ht]\centering
    \subfloat[original system graph fragment]{
          \begin{tikzpicture}
              [node distance=1cm and 1.75cm,
            caching/.style={color=blue!70,>=open diamond}, 
            audit-a/.style={color=teal, densely dashed}, 
            audit-d/.style={color=red, densely dotted}, 
            interest-a/.style={color=purple,>=\interestahead}, 
            interest-b/.style={color=brown,>=\interestdhead}, 
              every circle node/.style={draw,minimum width=20pt},thick,
              every node/.append style={scale=0.7, transform shape}]
              \begin{scope}[>=latex] 
            	  \node[circle] (v1) {$v_1$};
            	  \node[circle,below=of v1] (v2) {$v_2$};
            	  \node[circle,right=of v1] (v3) {$v_3$};
            	  \node[circle,right=of v3] (v4) {$v_4$};
            	  \draw[thick,->] (v1) to node[auto] {\textsf{$r_1$}} (v3);
            	  \draw[thick,->] (v2) to node[auto] {\textsf{$r_2$}} (v3);
            	  \draw[thick,->] (v3) to node[auto] {\textsf{$r_3$}} (v4);
              \end{scope}
          \end{tikzpicture}
        \label{img:system-graph-fragment1:a}
    }\qquad
     \subfloat[with caching edge]{
          \begin{tikzpicture}
              [node distance=1cm and 1.75cm,
            caching/.style={color=blue!70,>=open diamond}, 
            audit-a/.style={color=teal, densely dashed}, 
            audit-d/.style={color=red, densely dotted}, 
            interest-a/.style={color=purple,>=\interestahead}, 
            interest-b/.style={color=brown,>=\interestdhead}, 
              every circle node/.style={draw,minimum width=20pt},thick,
              every node/.append style={scale=0.7, transform shape}]
              \begin{scope}[>=latex] 
            	  \node[circle] (v1) {$v_1$};
            	  \node[circle,below=of v1] (v2) {$v_2$};
            	  \node[circle,right=of v1] (v3) {$v_3$};
            	  \node[circle,right=of v3] (v4) {$v_4$};
            	  \draw[thick,->] (v1) to node[auto] {\textsf{$r_1$}} (v3);
            	  \draw[thick,->] (v2) to node[auto] {\textsf{$r_2$}} (v3);
            	  \draw[thick,->] (v3) to node[auto] {\textsf{$r_3$}} (v4);
            	  \draw[caching,thick,->] (v2) to [bend right] node[swap,auto,color=black] {\textsf{$[p_5]$}} (v4);
              \end{scope}
          \end{tikzpicture}
        \label{img:system-graph-fragment1:b}
    }
    \caption{Adding a caching edge}\label{img:system-graph-fragment1}
\end{figure}

If an authorization request $q_2 = (v_2,v_4,a_2)$ is subsequently made, the caching edge $(v_2,v_4,[p_5])$ allows us to evaluate the request without re-evaluating the principal-matching policy; the authorization rule $(p_5,\star,a_2,0)$ subsequently results in request $q_2$ being denied ($\pd = \set{0}$).

To consider the scale of the potential benefit of caching edges, we review the experimental data reported by Crampton and Sellwood for numbers of nodes visited ($n$) and edges considered ($e$) during sample request evaluations (see Table~\ref{tbl:implementation_metrics} and~\cite[\S{3.3}]{CramptonS14}).
With support for caching edges, if the subject-object pairs participating in any of these requests were to be involved in subsequent requests the processing would instead be limited to locating the appropriate caching edge.
It should be clear that when considering requests which may require upwards of 50 edge evaluations (in a small example system graph), replacing this with a single caching edge lookup could dramatically improve evaluation performance.
%

\begin{table}[!ht]\centering
  \caption{Experiment results from~\cite[\S{3.3}]{CramptonS14}}\label{tbl:implementation_metrics}
  {\renewcommand{\arraystretch}{1.25}
  \begin{tabular}{|c|c|r|r|c|}
    \hline
        \bf Path condition & \bf Request & $n$ & $e$ & \bf Match Found\\
    \hline
    \hline
        $\pi_1$ & $q_1$ & 5 & 19 & Yes\\
        $\pi_1$ & $q_2$ & 7 & 24 & Yes\\
        $\pi_2$  & $q_3$ & 4 & 15 & Yes\\
        $\pi_3$ & $q_4$ & 17 & 58 & Yes\\
        $\pi_3$ & $q_5$ & 7 & 24 & No\\
    \hline
  \end{tabular}}
\end{table}

In the worst case, the number of caching edges directed out of a node is $O(|V|)$, where $V$ is the set of nodes in the system graph.
However, there are strategies that can be used to both prevent the system graph realizing the worst case and to reduce the impact of large numbers of caching edges.
To maintain an acceptable number of caching edges, we could, for example, use some form of cache purging.
We can also distinguish between relationship edges and caching edges using some flag on the edges and index the caching edges to dramatically decrease the time taken to search the set of caching edges.
Employing these techniques should enable the benefits of caching edges to be realised without incurring unacceptable costs during identification of the relevant caching edge.
Further experimental work is required to determine how best to make use of caching edges.

\subsection{Preemptive Caching}\label{sec:caching:preemptive}
Any optimisation provided by the caching of matched principals relies upon the existence of a caching edge in order to reduce the authorization request processing; the first request between a subject and object must, therefore, be processed normally in order to determine the list of matched principals which will label the caching edge.
If this initial evaluation were only performed when an authorization request were submitted, then the benefit of caching edges would be limited to repeated subject-object interactions alone.

However, many authorization systems will experience periods of time when no authorization requests are being evaluated.
The nature of many computing tasks is such that authorization is required sporadically amongst longer periods of computation by clients of the authorization system and idle time for the authorization system itself.
These periods of reduced load on the authorization system can be employed for the purpose of \emph{preemptive caching}.

Thus for pairs of nodes $(u,v)$ in the system graph, we may compute $G,u,v \xrightarrow{\rho,\sigma} \mp$ and insert a caching edge $(u,v,\mp)$.
The fact that a request's action is not employed during the principal matching process means that to perform this further optimization an authorization system solely requires a subject and object between whom the matched principals are to be identified.
There are numerous potential strategies for determining which subject-object pairs should be considered for preemptive caching.
Here we describe two simple and natural strategies.

\begin{description}
    \item[Subject-focused.]
    Subject-focused preemptive caching assumes that subjects who have recently made authorization requests are \emph{active} and so will likely make further requests.
    The authorization system, therefore, prioritises determining the list of matched principals between the most recently active subjects and a set of target objects.
    The set of target objects could be selected at random or may be systematically chosen using an appropriate mechanism for the system defined in the system graph.
    This might involve the target objects being \emph{popular}, \emph{significant} or those whose access may be particularly \emph{time-sensitive}.
    We envisage that the interpretation of these concepts may be system specific, as may their relative worth.

    As preemptive caching builds the number of caching edges within the system graph the number of subjects and objects under consideration could be expanded to provide greater coverage of the potential future requests.

   \item[Object-focused.]
   In certain applications, there will be resources that will be used by most users, such as certain database tables.
   Thus, it may make sense to construct caching edges for all active users for certain resources.
\end{description}

No matter the strategy, preemptive caching makes use of available processing time in order to perform the most complex part of authorization request evaluation: principal matching.
Any requests that are made utilising a subject-object pair which have already been evaluated by preemptive caching will be able to make use of the caching edge already established, even if that request were the first received for that pair.
Once determined, caching edges resulting from preemptive caching are no different from those established as a result of request evaluation.

\subsection{Cache Management}\label{sec:caching:management}
A change to any of the following components of the model could modify the list of matched principals for a subject and object:
\begin{itemize}
    \item the system graph;
    \item the principal-matching policy;
    \item the principal-matching strategy.
\end{itemize}
Such changes, therefore, may affect the correctness of caching edges.
(The obvious exception is a change to the system graph resulting from the addition or deletion of a caching edge.)
The most crude management technique for handling such changes involves removing all caching edges from the system graph whenever one of the above changes occurs.

In certain specific scenarios it may be possible for a system to identify a scope of impact for a particular change and thus apply a more refined management technique.
For example, if a change to the principal-matching policy removes all rules which are used to match a certain principal (and nothing more), then it would be sufficient for only caching edges labelled with a list including that principal to be purged.
Whilst such a refinement may further optimise the operations performed by the authorization system, its applicability will depend upon the configuration of the authorization system in its entirety.

We have already noted that it may make sense to purge the cache in order to limit the number of caching edges in the system graph.
Again, there are several possible purging strategies.
One would be simply to set a maximum threshold for the number of caching edges in the system graph.
A second, perhaps more useful, strategy would be to set a maximum threshold for the out-degree (measured in terms of caching edges) for any node in the graph.
We may also ``retire'' caching edges: any edge that hasn't been used as part of a request evaluation for some time period will be purged.
And we could employ mixed strategies, which might depend on the application and the nature of the system graph.

\section{Audit Records}\label{sec:audit}
Currently, the RPPM model's authorization request processing is ``memoryless'' with respect to previous requests and their respective outcomes.
Various scenarios and security policy principles make use of historical data.
Reputation systems and history-based access control (HBAC) systems~\cite{AbadiF03,EdjlaliAC99,KrukowNS08}, for example, rely on knowledge of previous interactions and requests in order to correctly make authorization decisions.
The Chinese Wall~\cite{BrewerN89} and non-static separation of duty principles~\cite{GligorGF98,SimonZ97} also rely on knowledge of previous actions to enforce their constraints.

We introduce the concept of \emph{audit edges}, through which we track the outcomes of authorization requests for subsequent use in policy evaluation.
Audit edges come in two flavours: those which directly record the decision of a previous authorization request (authorized and denied \emph{decision audit edges}) and those which, more generally, record an entity's interest in other entities based on its authorized requests (active and blocked \emph{interest audit edges}).
It should be noted that whilst we make direct use of audit edges for policy evaluation, they also have value in a system purely as an audit record.
We extend the set of relationships and further redefine the system graph to support these new edges.
Specifically, in the case of decision audit edges:
\begin{itemize}
    \item for each action $a$, we define two relationships $\audita{a}$ and $\auditd{a}$ and include the sets $\set{\audita{a} : a \in A}$ and $\set{\auditd{a} : a \in A}$ in the set of relationships;
    \item if the decision for request $(u,o,a)$ is allow, then we add the edge $(u,o,\audita{a})$ into the system graph;
    \item if the decision for request $(u,o,a)$ is deny, then we add the edge $(u,o,\auditd{a})$ into the system graph.
\end{itemize}

Both authorized and denied decision audit edges are inserted, automatically, into the system graph after request evaluation completes.
If such an edge does not already exist, a decision audit edge is added between the subject and object of the evaluated request, indicating its result.

The addition of interest audit edges also occurs automatically after request evaluation completes.
For such edges, the subject is the source node of the interest edge (as for decision audit edges); however, the destination node may not be the object of the request.
Interest audit edges are discussed in more detail in Section~\ref{sec:audit:chinesewall}.

\subsection{Enforcing Separation of Duty}\label{sec:audit:sod}
Separation of duty requires that certain combinations of actions are performed by a number of distinct individuals so as to reduce the likelihood of abuse of a system.
In its simplest form, separation of duty constraints require two individuals to each perform one of a pair of distinct actions so that a single individual cannot abuse the system.
A common application environment for such constraints is that of a finance system, where, for example, the individual authorized to add new suppliers should not be the same individual who is authorized to approve the payment of invoices to suppliers.
If a single individual were able to perform both of these actions they could set themselves up as a supplier within the finance system and then approve for payment any invoices they submitted as that supplier.
We define a mechanism here through which $n$ individuals can be required to perform $n$ actions on an object.
Before doing so, we explain a simplified version of the mechanism for the case $n = 3$.

Let us consider the system graph $G_2$ (see Figure~\ref{img:system-graph-fragment2:a}), the principal-matching policy $\rho =  [(r, p)]$ and the authorization policy $\pa = [(p,o,\star,1)]$.
With these policies and without audit edges, if individual $u_1$ makes the request $q_1 = (u_1,o,a_1)$ this will be authorized by matching principal $p$, as will subsequent requests $q_2 = (u_1,o,a_2)$ and $q_3 = (u_1,o,a_3)$.
A similar result would have occurred if these requests had been submitted with $u_2$ or $u_3$ as the subject.

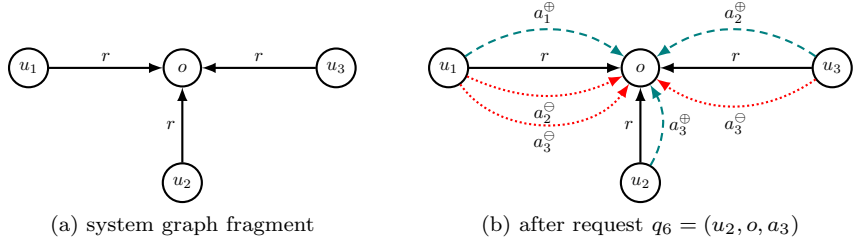
\begin{figure}[!ht]\centering
    \subfloat[system graph fragment]{
        \begin{tikzpicture}
            [node distance=1cm and 1.5cm,
            caching/.style={color=blue!70,>=open diamond}, 
            audit-a/.style={color=teal, densely dashed}, 
            audit-d/.style={color=red, densely dotted}, 
            interest-a/.style={color=purple,>=\interestahead}, 
            interest-b/.style={color=brown,>=\interestdhead}, 
              every circle node/.style={draw,minimum width=20pt},thick,
              every node/.append style={scale=0.7, transform shape}]
              \begin{scope}[>=latex] 
            	  \node[circle] (u1) {$u_1$};
            	  \node[circle,right=of u1] (o) {$o$};
            	  \node[circle,below=of o] (u2) {$u_2$};
            	  \node[circle,right=of o] (u3) {$u_3$};
            	  \draw[thick,->] (u1) to node[auto] {\textsf{$r$}} (o);
            	  \draw[thick,->] (u2) to node[auto] {\textsf{$r$}} (o);
            	  \draw[thick,->] (u3) to node[swap,auto] {\textsf{$r$}} (o);
              \end{scope}
        \end{tikzpicture}
        \label{img:system-graph-fragment2:a}
    }\qquad
     \subfloat[after request $q_6 = (u_2,o,a_3)$]{
        \begin{tikzpicture}
            [node distance=1cm and 2cm,
            caching/.style={color=blue!70,>=open diamond}, 
            audit-a/.style={color=teal, densely dashed}, 
            audit-d/.style={color=red, densely dotted}, 
            interest-a/.style={color=purple,>=\interestahead}, 
            interest-b/.style={color=brown,>=\interestdhead}, 
              every circle node/.style={draw,minimum width=20pt},thick,
              every node/.append style={scale=0.7, transform shape}]
              \begin{scope}[>=latex] 
            	  \node[circle] (u1) {$u_1$};
            	  \node[circle,right=of u1] (o) {$o$};
            	  \node[circle,below=of o] (u2) {$u_2$};
            	  \node[circle,right=of o] (u3) {$u_3$};
            	  \draw[thick,->] (u1) to node[auto] {\textsf{$r$}} (o);
            	  \draw[thick,->] (u2) to node[auto] {\textsf{$r$}} (o);
            	  \draw[thick,->] (u3) to node[swap,auto] {\textsf{$r$}} (o);
                  \draw[audit-a,thick,->] (u1) to [bend left=35] node[auto,color=black] {\textsf{\audita{a_1}}} (o);
                  \draw[audit-d,thick,->] (u1) to [bend right=25] node[swap,auto,color=black] {\textsf{\auditd{a_2}}} (o);
                  \draw[audit-d,thick,->] (u1) to [bend right=55] node[swap,auto,color=black] {\textsf{\auditd{a_3}}} (o);
                  \draw[audit-a,thick,->] (u2) to [bend right] node[swap,auto,color=black] {\textsf{\audita{a_3}}} (o);
                  \draw[audit-a,thick,->] (u3) to [bend right=35] node[swap,auto,color=black] {\textsf{\audita{a_2}}} (o);
                  \draw[audit-d,thick,->] (u3) to [bend left=35] node[auto,color=black] {\textsf{\auditd{a_3}}} (o);
              \end{scope}
        \end{tikzpicture}
        \label{img:system-graph-fragment2:d}
    }
    \caption{Adding decision audit edges}\label{img:system-graph-fragment2}
\end{figure}

A basic implementation of separation of duty can be employed by introducing a new principal $p_{seen}$ which matches if a user has performed any action on the object.
We change the principal-matching and authorization policies to
\[
  [(\audita{a_1},p_{seen}),(\audita{a_2},p_{seen}),(\audita{a_3},p_{seen}),(r,p)]\quad\text{and}\quad [(p_{seen},o,\star,0),(p,o,\star,1)]
\]
respectively\footnote{We assume the use of the \textsf{AllMatch} PMS and the \textsf{DenyOverride} CRS, but we could equally employ the \textsf{FirstMatch} PMS with any CRS as long as we ensure that the constraint principal-matching rules are added before any existing rules.}.
Using this combination of policies means that any user who has performed an action on object $o$ is prevented from performing another action as all actions are denied to the principal $p_{seen}$.

Whilst this basic implementation fulfils the requirement that no user may perform more than one action on the object, we may wish to specify more refined separation of duty policies within the system.
The basic implementation has several limitations which RPPM's policies are flexible and powerful enough to resolve.
Specifically, all actions within the system are included in the separation of duty constraint due to the use of an authorization rule for all actions $(p_{seen},o,\star,0)$.
Additionally, having performed an action on $o$ a user is unable to repeat the action performed, as well as being unable to perform any other action.

If we wish to enforce a more flexible separation of duty constraint on a subset of actions $\set{a_1, a_2, a_3} \subseteq A$ such that distinct individuals are required to perform each action, we can modify the principal-matching policy to $\rho = [(\audita{a_1},p_1),(\audita{a_2},p_2),(\audita{a_3},p_3),(r,p)]$ and the authorization policy to:
\begin{align*}
    \pa = [&(p_1,o,a_2,0),(p_1,o,a_3,0),
                \;(p_2,o,a_1,0),(p_2,o,a_3,0),\\
                &(p_3,o,a_1,0),(p_3,o,a_2,0),
                \;(p,o,\star,1)]
\end{align*}

\begin{description}
    \item[The first action.]
    Revisiting our example for $G_2$, an initial request $q_1 = (u_1,o,a_1)$ will, once again, be authorized (with $\mp = [p]$) but will, this time, result in the addition of an authorized decision audit edge $(u_1,o,\audita{a_1})$. 
    If $u_1$ then makes a request $q_2 = (u_1,o,a_2)$ this will be denied as $\mp = [p_1,p]$ and the authorization rule $(p_1,o,a_2,0)$ indicates a deny which overrides the authorization from the rule $(p,o,\star,1)$.
    Similarly if $u_1$ makes a request $q_3 = (u_1,o,a_3)$ this will be denied as once again $\mp = [p_1,p]$ and the deny authorization rule $(p_1,o,a_3,0)$ overrides $(p,o,\star,1)$.
    These two denied requests would result in denied decision audit edges $(u_1,o,\auditd{a_2})$ and $(u_1,o,\auditd{a_3})$. 

    \item[The second action.]
    However, if $u_3$ makes the request $q_4 = (u_3,o,a_2)$ this will be authorized with $\mp = [p]$ and use of the authorization rule $(p,o,\star,1)$; the authorized decision audit edge $(u_3,o,\audita{a_2})$ results.
    If $u_3$ attempts to then make request $q_5 = (u_3,o,a_3)$ this will be denied in the same manner that request $q_3$ was, with the subsequent addition of a denied decision audit edge $(u_3,o,\auditd{a_3})$.

    \item[The last action.]
    As $a_1$ was performed by $u_1$ and $a_2$ was performed by $u_3$ it remains, for successful operation, for $u_2$ to make request $q_6 = (u_2,o,a_3)$.
    This request will be authorized with $\mp = [p]$ and the use of the authorization rule $(p,o,\star,1)$, resulting in the authorized decision audit edge $(u_2,o,\audita{a_3})$.
    The system graph that results after all of these requests have been made is as shown in Figure~\ref{img:system-graph-fragment2:d}.
\end{description}

More generally, suppose we have a principal-matching policy $\rho$ and an authorization policy $\pa$. If we require that the actions $\set{a_1,\dots,a_n}$ should each be performed by different users (and the same action may be repeated), we add the rules
\[
 (\audita{a_1},p_1), \dots, (\audita{a_n},p_n)
\]
to $\rho$ and let the new policy be $\rho'$. And for each principal $p_i$, we add the set of rules
\[
 \set{(p_i,o,a_j,0) :  1 \leq j \leq n, j \ne i}.
\]
to $\pa$ denoting the new policy $\pa'$. We then have the following result

\begin{proposition}\label{prop:separation-of-duty}
    Given an RPPM separation of duty policy, as described above, for any user $u$ the request $(u,o,a)$ is allowed if the request is authorized by $\rho'$ and $\pa'$ and no request of the form $(u,o,a')$, where $a' \neq a$, has been previously authorized; the request is denied otherwise.
\end{proposition}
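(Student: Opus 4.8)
The plan is to reduce the statement to a single ``bridge'' observation linking the current system graph to the request history, after which the decision can be read off $\rho'$ and $\pa'$ by a short case analysis. Fix the system graph $G$ at the moment $(u,o,a)$ is evaluated. Since the only changes made to a well-formed system graph during request processing are insertions of audit (and caching) edges, and these are never deleted, $G$ contains the edge $(u,o,\audita{a_i})$ precisely when some earlier request $(u,o,a_i)$ was authorized. By Definition~\ref{def:path-condition-semantics} this means $G,u,o \models \audita{a_i}$ iff $(u,o,a_i)$ was previously authorized; and since the unique rule of $\rho'$ with principal $p_i$ is $(\audita{a_i},p_i)$ (the $p_i$ being fresh, as in the construction) and we use \textsf{AllMatch}, the matched-principal list obeys $p_i \in \mp$ iff $(u,o,a_i)$ was previously authorized. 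Note that a previously \emph{denied} request $(u,o,a_i)$ contributes only an edge labelled $\auditd{a_i}$, which matches no principal of $\rho'$; so ``previously authorized'' is genuinely the right hypothesis, not ``previously requested''.

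First I would dispose of actions $a \notin \set{a_1,\dots,a_n}$: no rule added to $\pa$ has action $a$, so $\pd$ and the decision are exactly those of the original policies, and the claim is immediate. For $a = a_i$, consider the two directions. If $(u,o,a_i)$ is authorized by $\rho'$ and $\pa'$ and no earlier $(u,o,a')$ with $a' \ne a_i$ was authorized, then by the bridge observation $p_j \notin \mp$ for all $j \ne i$, so none of the added rules fires for action $a_i$ (those with principal $p_j$, $j\ne i$, because $p_j\notin\mp$; any with principal $p_i$ because their action is some $a_j \ne a_i$); hence $\pd$ coincides with the decision set of $\rho,\pa$, which contains $1$ and not $0$ by hypothesis, so \textsf{DenyOverride} returns allow. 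Conversely, if the evaluation returns allow then $1\in\pd$ and $0\notin\pd$ --- so the request is authorized by $\rho'$, $\pa'$ --- and if some earlier $(u,o,a_j)$ with $a_j\ne a_i$ had been authorized we would have $p_j\in\mp$ ($j\ne i$) and the rule $(p_j,o,a_i,0)\in\pa'$ would put $0\in\pd$, forcing a denial, a contradiction; hence no such earlier request exists. The ``denied otherwise'' clause then follows because allow and deny exhaust the outcomes.

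The main obstacle is definitional rather than technical: one must fix a reading of ``authorized by $\rho'$ and $\pa'$'' under which the proposition is neither circular nor false. The reading I would adopt and state explicitly is the decision the system would return in the \emph{absence} of the separation-of-duty history --- equivalently, $1\in\pd$ with no deny contributed by a rule of the original policy $\pa$ --- so that the proposition genuinely asserts that the construction is both sound (never grants a user a second distinct action on $o$) and non-restrictive (never blocks an otherwise-authorized first action). Two smaller points each need a line: the new relationship labels $\audita{a}$, $\auditd{a}$ do not occur in the path conditions of $\rho$, so adding audit edges leaves the principals contributed by the original rules untouched; and a caching edge created before the relevant audit edge could cache a stale $\mp$, so the argument tacitly assumes such caching edges incident to $(u,o)$ are invalidated when an audit edge on $(u,o)$ is inserted, consistent with Section~\ref{sec:caching:management}. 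Everything else is a direct unwinding of Definition~\ref{def:path-condition-semantics} and the policy-evaluation procedure.
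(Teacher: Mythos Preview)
Your argument is correct and follows essentially the same line as the paper's proof: both establish that $p_i$ is matched precisely when $(u,o,a_i)$ was previously authorized (via the audit edge $(u,o,\audita{a_i})$), and then split on whether such a prior authorization exists and whether it coincides with the current action. The paper casts this as induction on the number of evaluated requests (though the inductive hypothesis is never really invoked), whereas you state the bridge observation once and proceed directly; your version is also more careful about the case $a\notin\{a_1,\dots,a_n\}$ and about the reading of ``authorized by $\rho'$ and $\pa'$'', which the paper's own proof tacitly treats as ``authorized by $\rho$ and $\pa$''.
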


\begin{proof}
    The proof proceeds by induction on the number of evaluated requests.
    Consider the (base) case when no requests have yet been made.
    A request $(u,o,a)$ where $a \in \set{a_1, \dots, a_n}$ will not match any of the $n$ inserted principal-matching rules as no decision audit edges currently exist in the system graph.
    Thus request $(u,o,a)$ will be authorized if it is authorized by $\rho$ and $\pa$ (and hence will be authorized by $\rho'$ and $\pa'$).

    Now suppose the result holds for all sequences of $m$ requests and consider the request $(u,o,a)$ where $a \in \set{a_1, \dots, a_n}$.
    \begin{itemize}
        \item If $u$ has previously performed a constrained action $a_i$, $1 \le i \le n$, then the request will satisfy principal-matching condition $(\audita{a_i}, p_i)$.
	
	      Now, if $a_i = a$, there is no authorization rule of the form $(p_i,o,a_i,0)$ and the request will, therefore, be authorized if and only if it is authorized by $\rho$ and $\pa$.
	
	      Conversely, if $a_i \ne a$, then $a = a_j$, for some $j \ne i$, and the authorization rule $(p_i, o, a, 0)$, together with the \textsf{DenyOverride} CRS will cause the request to be denied.
        \item If user $u$ has not previously performed a constrained action then the request will not match any of the principal-matching rules that were added to create $\rho'$.
	      Thus the request will only be authorized if it is authorized by $\rho$ and $\pa$.
    \end{itemize}
\end{proof}
\subsection{Enforcing Chinese Walls}\label{sec:audit:chinesewall}
The Chinese Wall principle may be used to control access to information in order to prevent any conflicts of interest arising.
The standard use case concerns a consultancy that provides services to multiple clients, some of whom are competitors.
It is important that a consultant does not access documents of company $A$ if she has previously accessed documents of a competitor of $A$.

To support the Chinese Wall policy, systems classify data using conflict of interest classes~\cite{BrewerN89}, indicating groups of competitor entities.
Requests to access a company's resources within a conflict of interest class will only be authorized if no previous request was authorized accessing resources from another company in that conflict of interest class.

Unlike the general approach for separation of duty, a general approach for Chinese Wall requires fewer policy changes but does rely on a particular basic layout of system graph.
This layout is such that the users who will be making requests are connected (directly or indirectly) to the companies (which may or may not be competitors of each other).
These companies are then connected to their respective data entities, which will be the targets of users' requests.
This arrangement is depicted, conceptually, in Figure~\ref{img:chinesewallconcept:b}, with the path condition $\pi_1$ representing the chain of relationships between users and companies and $\pi_2$ between the data entities and the companies.\footnote{It should be noted that Figure~\ref{img:chinesewallconcept} does not show system graphs; it shows high-level representations of the `shape' of a system graph.}
In other words, the path from an authorized user to a company will contain the same labels (and will match the path condition $\pi_1$), irrespective of the specific identities of the user and company.
Similarly, the path from a data object to its owner company will contain the same labels (and match the path condition $\pi_2$).
Thus, the principal that is authorized to access companies' data objects would be matched using the path condition $\pi_1 \comp \overline{\pi_2}$.

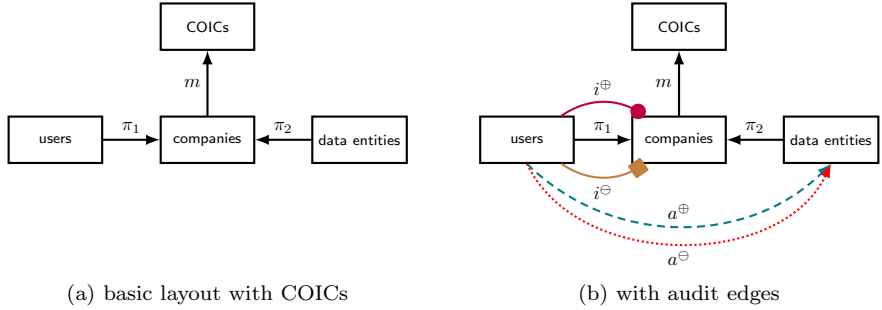
\begin{figure}[!ht]\centering
     \subfloat[basic layout with COICs]{
        \begin{tikzpicture}
            [node distance=1.5cm and 2cm, on grid,
            caching/.style={color=blue!70,>=open diamond}, 
            audit-a/.style={color=teal, densely dashed}, 
            audit-d/.style={color=red, densely dotted}, 
            interest-a/.style={color=purple,>=\interestahead}, 
            interest-b/.style={color=brown,>=\interestdhead}, 
            entity/.style={rectangle,draw,minimum width=50pt,minimum height=25pt},thick,
            every node/.append style={scale=0.7, transform shape}]
            \begin{scope}[>=latex] 
                \node[entity] (u) {\textsf{\footnotesize users}};
                \node[entity,right=of u] (c) {\textsf{\footnotesize companies}};
                \node[entity,right=of c] (d) {\textsf{\footnotesize data entities}};
                \node[entity,above=of c] (i) {\textsf{\footnotesize COICs}};
                \draw[thick,->] (u) to node[auto] {\textsf{$\pi_1$}} (c);
                \draw[thick,->] (d) to node[swap,auto] {\textsf{$\pi_2$}} (c);
                \draw[thick,->] (c) to node[auto] {\textsf{$m$}} (i);
                \draw[audit-a,thick,->,opacity=0] (u.south) to [bend right=45] node[swap,auto,color=black] {\textsf{\audita{a}}} (d.south);
                \draw[audit-d,thick,->,opacity=0] (u.south) to [bend right=65] node[swap,auto,color=black] {\textsf{\auditd{a}}} (d.south);
            \end{scope}
        \end{tikzpicture}
        \label{img:chinesewallconcept:b}
    }\qquad
     \subfloat[with audit edges]{
        \begin{tikzpicture}
            [node distance=1.5cm and 2cm, on grid,
            caching/.style={color=blue!70,>=open diamond}, 
            audit-a/.style={color=teal, densely dashed}, 
            audit-d/.style={color=red, densely dotted}, 
            interest-a/.style={color=purple,>=\interestahead}, 
            interest-b/.style={color=brown,>=\interestdhead}, 
            entity/.style={rectangle,draw,minimum width=50pt,minimum height=25pt},thick,
            every node/.append style={scale=0.7, transform shape}]
            \begin{scope}[>=latex] 
                \node[entity] (u) {\textsf{\footnotesize users}};
                \node[entity,right=of u] (c) {\textsf{\footnotesize companies}};
                \node[entity,right=of c] (d) {\textsf{\footnotesize data entities}};
                \node[entity,above=of c] (i) {\textsf{\footnotesize COICs}};
                \draw[thick,->] (u) to node[auto] {\textsf{$\pi_1$}} (c);
                \draw[thick,->] (d) to node[swap,auto] {\textsf{$\pi_2$}} (c);
                \draw[thick,->] (c) to node[auto] {\textsf{$m$}} (i);
                \draw[interest-a,thick,->] (u) to [bend left=35] node[auto,color=black] {\textsf{\interesta}} (c);
                \draw[interest-b,thick,->] (u) to [bend right=35] node[swap,auto,color=black] {\textsf{\interestb}} (c);
                \draw[audit-a,thick,->] (u.south) to [bend right=45] node[auto,color=black] {\textsf{\audita{a}}} (d.south);
                \draw[audit-d,thick,->] (u.south) to [bend right=65] node[swap,auto,color=black] {\textsf{\auditd{a}}} (d.south);
            \end{scope}
      \end{tikzpicture}
        \label{img:chinesewallconcept:c}
    }
    \caption{Chinese Wall Generalisation}\label{img:chinesewallconcept}
\end{figure}

To support the Chinese Wall constraint, the basic layout is supplemented by conflict of interest classes, to which companies are connected directly by the member ($m$) relationship (see Figure~\ref{img:chinesewallconcept:b}).
We assume here that membership of conflict of interest classes is determined when the system graph is initially populated and remains fixed through the lifetime of the system.
When users are authorized (or denied) access to particular data entities, authorized (or denied) decision audit edges will result for these requests as shown in Figure~\ref{img:chinesewallconcept:c}.
We additionally introduce interest audit edges into the system graph which are added between users and companies (see Figure~\ref{img:chinesewallconcept:c}).
Active interest audit edges are labelled with \interesta, blocked interest audit edges are labelled with \interestb.
We, therefore, extend the set of relationships to include the set $\setinterest$, thus allowing the system graph to support these new edges.
Graphically, we represent active interest audit edges with a filled circle head, whilst blocked interest audit edges have a filled square head.

Informally, when a subject's request to access a company's data is authorized, an active interest audit edge is added (if it doesn't already exist) between the subject and the company whose data was accessed.
(We will also add an authorized decision audit edge between the subject and the data entity if it does not already exist.)
Additionally, blocked interest audit edges are added (if they don't already exist) between the subject and all other companies who are members of the conflict of interest class to which the first company is a member.
Interest audit edges are not added after denied authorization requests.

For a concrete example, consider the system graph $G_4$ shown in Figure~\ref{img:system-graph-fragment4:a}, where a member of staff $u_1$ works for an employer $e_1$.
This employer supplies numerous clients ($c_1$, $c_2$ and $c_3$) which have data in the form of files ($f_1$, $f_2$, $f_3$ and $f_4$).
In this example users are connected to companies by $\pi_1 = w \comp s$ whilst data entities are connected to companies by $\pi_2 = d$.

\begin{figure}[!th]\centering
    \subfloat[system graph fragment]{
        \begin{tikzpicture}
            [node distance=1.5cm and 1.6cm, on grid,
            caching/.style={color=blue!70,>=open diamond}, 
            audit-a/.style={color=teal, densely dashed}, 
            audit-d/.style={color=red, densely dotted}, 
            interest-a/.style={color=purple,>=\interestahead}, 
            interest-b/.style={color=brown,>=\interestdhead}, 
            every circle node/.style={draw,minimum width=20pt},thick,
            every node/.append style={scale=0.7, transform shape}]
            \begin{scope}[>=latex] 
                \node[circle] (u1) {$u_1$};
                \node[circle,below=of u1] (e1) {$e_1$};
                \node[circle,left=of e1] (c1) {$c_1$};
                \node[circle,below=of e1] (c2) {$c_2$};
                \node[circle,right=of e1] (c3) {$c_3$};
                \node[circle,above=of c1] (f1) {$f_1$};
                \node[circle,left=of c1] (f4) {$f_4$};
                \node[circle,above=of c3] (f3) {$f_3$};
                \node[circle,below=of c2] (f2) {$f_2$};
                \node[circle,left=of c2] (i1) {$i_1$};
                \node[circle,right=of c2] (i2) {$i_2$};
                \draw[thick,->] (u1) to node[swap,auto] {\textsf{$w$}} (e1);
                \draw[thick,->] (e1) to node[swap,auto] {\textsf{$s$}} (c1);
                \draw[thick,->] (e1) to node[swap,auto] {\textsf{$s$}} (c2);
                \draw[thick,->] (e1) to node[auto] {\textsf{$s$}} (c3);
                \draw[thick,->] (f1) to node[swap,auto] {\textsf{$d$}} (c1);
                \draw[thick,->] (f2) to node[auto] {\textsf{$d$}} (c2);
                \draw[thick,->] (f3) to node[swap,auto] {\textsf{$d$}} (c3);
                \draw[thick,->] (f4) to node[auto] {\textsf{$d$}} (c1);
                \draw[thick,->] (c1) to node[swap,auto] {\textsf{$m$}} (i1);
                \draw[thick,->] (c2) to node[swap,auto] {\textsf{$m$}} (i1);
                \draw[thick,->] (c3) to node[swap,auto] {\textsf{$m$}} (i2);
            \end{scope}
        \end{tikzpicture}
        \label{img:system-graph-fragment4:a}
    }\qquad
     \subfloat[after request $q_4 = (u_1,f_3,\textsf{read})$]{
        \begin{tikzpicture}
            [node distance=1.5cm and 1.6cm, on grid,
            caching/.style={color=blue!70,>=open diamond}, 
            audit-a/.style={color=teal, densely dashed}, 
            audit-d/.style={color=red, densely dotted}, 
            interest-a/.style={color=purple,>=\interestahead}, 
            interest-b/.style={color=brown,>=\interestdhead}, 
            every circle node/.style={draw,minimum width=20pt},thick,
            every node/.append style={scale=0.7, transform shape}]
            \begin{scope}[>=latex] 
                \node[circle] (u1) {$u_1$};
                \node[circle,below=of u1] (e1) {$e_1$};
                \node[circle,left=of e1] (c1) {$c_1$};
                \node[circle,below=of e1] (c2) {$c_2$};
                \node[circle,right=of e1] (c3) {$c_3$};
                \node[circle,above=of c1] (f1) {$f_1$};
                \node[circle,left=of c1] (f4) {$f_4$};
                \node[circle,above=of c3] (f3) {$f_3$};
                \node[circle,below=of c2] (f2) {$f_2$};
                \node[circle,left=of c2] (i1) {$i_1$};
                \node[circle,right=of c2] (i2) {$i_2$};
                \draw[thick,->] (u1) to node[swap,auto] {\textsf{$w$}} (e1);
                \draw[thick,->] (e1) to node[swap,auto] {\textsf{$s$}} (c1);
                \draw[thick,->] (e1) to node[swap,auto] {\textsf{$s$}} (c2);
                \draw[thick,->] (e1) to node[auto] {\textsf{$s$}} (c3);
                \draw[thick,->] (f1) to node[swap,auto] {\textsf{$d$}} (c1);
                \draw[thick,->] (f2) to node[auto] {\textsf{$d$}} (c2);
                \draw[thick,->] (f3) to node[swap,auto] {\textsf{$d$}} (c3);
                \draw[thick,->] (f4) to node[auto] {\textsf{$d$}} (c1);
                \draw[thick,->] (c1) to node[swap,auto] {\textsf{$m$}} (i1);
                \draw[thick,->] (c2) to node[swap,auto] {\textsf{$m$}} (i1);
                \draw[thick,->] (c3) to node[swap,auto] {\textsf{$m$}} (i2);   	
                \draw[audit-a,thick,->] (u1) to node[swap,auto,color=black] {\textsf{\audita{\textsf{read}}}} (f1);
                \draw[interest-a,thick,->] (u1) to node[auto,color=black] {\textsf{\interesta}} (c1);
                \draw[interest-b,thick,->] (u1) to [bend right] node[swap,auto,pos=0.625,color=black] {\textsf{\interestb}} (c2);
                \draw[audit-d,thick,->] (u1) to [bend left] node[auto,near end,color=black] {\textsf{\auditd{\textsf{read}}}} (f2);
                \draw[audit-a,thick,->] (u1) to node[auto,color=black] {\textsf{\audita{\textsf{read}}}} (f3);
                \draw[audit-a,thick,->] (u1) to node[swap,auto,near end,color=black] {\textsf{\audita{\textsf{read}}}} (f4);
                \draw[interest-a,thick,->] (u1) to node[auto,color=black] {\textsf{\interesta}} (c3);
              \end{scope}
      \end{tikzpicture}
        \label{img:system-graph-fragment4:c}
    }
    \caption{Enforcing the Chinese Wall policy in RPPM}\label{img:system-graph-fragment4}
\end{figure}
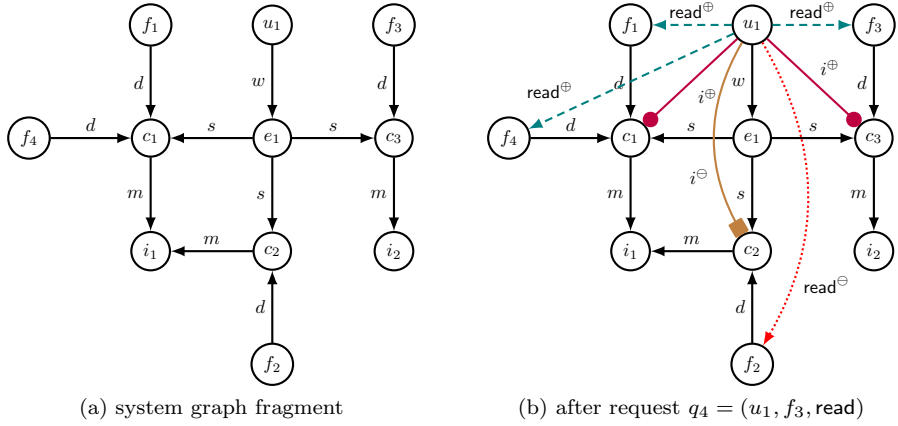

If we assume the existence of a principal-matching policy $\rho = [(w \comp s \comp \overline{d}, p)]$ and authorization policy $\pa = [(p,\star,\textsf{read},1)]$, then $u_1$ would be authorized to read all files.
However, the clients are members of conflict of interest classes ($i_1$ and $i_2$) with clients $c_1$ and $c_2$ being competitors in $i_1$.
Accordingly, we modify the principal-matching and authorization policies as follows:
\[
  \rho_{\it cw} = [(\interestb \comp \overline{d}, p_{\it cw}), (w \comp s \comp \overline{d}, p)]\quad\text{and}\quad \pa_{\it cw} = [(p_{\it cw},\star,\star,0),(p,\star,\textsf{read},1)].
\]

We now consider four different types of request that can arise.
Figure~\ref{img:system-graph-fragment4:c} shows the graph $G_4$ after all four requests have been made.
\begin{description}
    \item[Initial declaration of interest.]
    The request $q_1 = (u_1, f_1, \textsf{read})$ to read data belonging to client $c_1$ in the graph $G_4$ will be authorized:
    the first principal-matching rule is not matched but the second one is.
    Thus $\mp = [p]$ and the request is authorized, resulting in an authorized decision audit edge $(u_1, f_1, \audita{\textsf{read}})$ being added to the graph along with the interest edges $(u_1,c_1,\interesta)$ and $(u_1,c_2,\interestb)$. 

    \item[Continued interest.]
    If $u_1$ makes a second request $q_2 = (u_1,f_4,\textsf{read})$ for data of client $c_1$ this will also be authorized.
    The first principal-matching rule cannot be matched, as before, and the second can with principal $p$, once again, being authorized to read all objects.
    The authorized decision audit edge $(u_1, f_4, \audita{\textsf{read}})$ will be added to the graph but no new interest edges are added as the required edges already exist.

    \item[Conflict of interest request.]
    If $u_1$ requests data for a competing client $c_2$ using a third request $q_3 = (u_1,f_2,\textsf{read})$, this will be denied.
    This time, the principal-matching rule $(\interestb \comp \overline{d}, p_{\it cw})$ is matched and $p_{\it cw}$ is denied all actions on all objects.
    A denial audit edge $(u_1,f_2,\auditd{\textsf{read}})$ is added to the graph.

    \item[New declaration of interest which doesn't conflict.]
    Lastly $u_1$ makes a request $q_4 = (u_1,f_3,\textsf{read})$ for data of a third client $c_3$ who does not conflict with $c_1$ (with membership in a different conflict of interest class).
    As with the first two requests, this request will be authorized using the second principal-matching rule and matched principal $p$.
    An authorized decision audit edge $(u_1,f_3,\audita{\textsf{read}})$ is then added to the graph along with the active interest edge $(u_1,c_3,\interesta)$.
    No blocked interest edges are added as there are no companies other than $c_3$ who are members of conflict of interest class $i_2$.

\end{description}

More generally, suppose we have a principal-matching policy $\rho$.
In order to enforce the Chinese Wall constraint using this basic layout we add a new principal-matching rule $(\interestb \comp \overline{\pi_2}, p_{\it cw})$ to $\rho$ to produce a new policy $\rho_{\it cw}$.
The $p_{\it cw}$ principal is denied all actions on all data entities through the inclusion of an authorization rule $(p_{\it cw}, \star, \star, 0)$ into the existing authorization policy $\pa$, producing a new authorization policy $\pa_{\it cw}$.%
  \footnote{Once again, whilst we use the \textsf{AllMatch} PMS and the \textsf{DenyOverride} CRS we could equally employ the \textsf{FirstMatch} PMS with any CRS as long as we ensure that the constraint principal-matching rules are added before any existing rules.}

\begin{proposition}\label{prop:chinese-wall}
    Given an RPPM Chinese Wall constraint, as described above, for any user $u$ the request $(u,o,a)$ is allowed if the request is authorized by $\rho_{\it cw}$ and $\pa_{\it cw}$ and the user $u$ does not have an active interest in any company $c'$ which is a member of the same conflict of interest class as the company $c \neq c'$ responsible for $o$.
    In all other cases the request is denied.
\end{proposition}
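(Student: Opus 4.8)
The plan is to mimic the proof of Proposition~\ref{prop:separation-of-duty}: argue by induction on the number of evaluated requests, maintaining a small set of invariants that describe exactly which interest audit edges are present in the system graph (we assume, as in the set-up, that the initial graph contains no audit or interest edges and that $m$-edges to conflict of interest classes are fixed). Say two companies are \emph{competitors} if they are distinct members of the same conflict of interest class, which by the basic-layout hypotheses is the unique class containing either of them, and write $c(o)$ for the company responsible for $o$, i.e.\ the unique $c$ with $G,o,c \models \pi_2$. The first observation is that, since $\interestb$ is a single relationship label, $G,u,o \models \interestb \comp \overline{\pi_2}$ holds if and only if the graph contains the blocked interest edge $(u, c(o), \interestb)$; hence the rule $(\interestb \comp \overline{\pi_2}, p_{\it cw})$ matches $p_{\it cw}$ for a request $(u,o,a)$ precisely when $(u, c(o), \interestb) \in E$, in which case $(p_{\it cw},\star,\star,0)$ together with the \textsf{DenyOverride} CRS forces the request to be denied regardless of $a$ and of the other matched principals.

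The invariants I would carry through the induction are, for every user $u$ and company $c$: (i) $(u,c,\interesta) \in E$ iff some earlier request $(u,o',a')$ with $c(o') = c$ was authorized; (ii) $(u,c,\interestb) \in E$ iff some earlier request $(u,o',a')$ with $c(o')$ a competitor of $c$ was authorized; and (iii) consistency: $u$ has never had authorized requests for data of two distinct companies in a common conflict of interest class (equivalently, no pair $(u,c)$ carries both an active and a blocked interest edge). In the base case no requests have been made and no interest edges exist, so (i)--(iii) hold vacuously and the first request is evaluated exactly as by $\rho$ and $\pa$ since $p_{\it cw}$ cannot match. For the inductive step, consider the $(m{+}1)$-st request $(u,o,a)$ and set $c = c(o)$. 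A denied request adds only a decision audit edge labelled $\auditd{a}$, which no principal-matching rule mentions, so it changes none of (i)--(iii). If the request is authorized, then by the matching observation $(u,c,\interestb) \notin E$, so by (ii) $u$ had no earlier authorized request to a competitor of $c$; the post-processing then adds $(u,c,\interesta)$ and, for every other member $c''$ of $c$'s conflict of interest class, adds $(u,c'',\interestb)$ (each only if absent). Invariants (i) and (ii) are restored for these new edges by construction, and (iii) is preserved: the only company $u$ acquires a new active interest in is $c$, no blocked edge to $c$ is created, and for each newly blocked $c'' \neq c$ an active interest of $u$ in $c''$ would, by (ii), have forced $(u,c,\interestb) \in E$, contradicting that the request was authorized.

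Given the invariants, the proposition follows by a case split on $(u,o,a)$ with $c = c(o)$. If $u$ has an active interest in some competitor $c' \neq c$ of $c$, then by (i) $u$ had an authorized request for $c'$'s data, so by (ii) $(u,c,\interestb) \in E$, so $p_{\it cw}$ matches and the request is denied --- consistent with the ``denied otherwise'' clause. If $u$ has no such conflicting active interest, then by the contrapositive of (ii) there is no edge $(u,c,\interestb)$, so $p_{\it cw}$ does not match, $\mp$ is exactly the set of principals matched by $\rho$, and the extra authorization rule $(p_{\it cw},\star,\star,0)$ never fires; hence the request is allowed if and only if it is authorized by $\rho$ and $\pa$, which is the same as being authorized by $\rho_{\it cw}$ and $\pa_{\it cw}$. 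Combining the two cases gives: the request is allowed iff it is authorized by $\rho_{\it cw}$ and $\pa_{\it cw}$ and $u$ has no conflicting active interest, as claimed.

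I expect the main obstacle to be establishing invariant (iii) jointly with the matching behaviour rather than separately: one must show the system can never be driven into a state where a user simultaneously holds an active and a blocked interest in the same company (which would make the interest edges stop faithfully recording history), and it is exactly the deny-on-$p_{\it cw}$ rule that prevents this, so (iii) and the deny behaviour have to be proved together in the same induction. A secondary point needing care is making the ``basic layout'' hypotheses precise enough that $c(o)$ is genuinely well defined --- no spurious $\pi_2$-paths from $o$ to other companies, and each company in a single conflict of interest class --- since otherwise the reduction of $G,u,o \models \interestb \comp \overline{\pi_2}$ to the single-edge condition $(u,c(o),\interestb)\in E$, and hence the whole argument, would fail.
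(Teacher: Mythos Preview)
Your proof is correct and shares the paper's overall strategy of induction on the number of evaluated requests, but your decomposition is tighter than the paper's. The paper's proof does not isolate invariants; instead, for the inductive step it performs an ad-hoc four-way case split on the requesting user's current active interests (none; an active interest in $c(o)$ itself; an active interest in a competitor of $c(o)$; an active interest only in companies in other conflict of interest classes), and in each case re-derives which of the inserted and preexisting rules fire and which edges are subsequently added. By contrast, you package the state of the system graph into invariants (i)--(iii), prove once that they are preserved, and then read off the matching behaviour of $(\interestb \comp \overline{\pi_2}, p_{\it cw})$ from (ii). Your approach buys you two things the paper's argument leaves implicit: first, invariant (iii) makes explicit that the system can never reach a state in which a user holds both an active and a blocked interest in the same company, which the paper relies on tacitly when it asserts that an active interest in $c(o)$ precludes a blocked one; second, your reduction of $G,u,o \models \interestb \comp \overline{\pi_2}$ to the single-edge test $(u,c(o),\interestb)\in E$ surfaces the basic-layout hypotheses (uniqueness of $c(o)$, single conflict of interest class per company) that the paper uses without comment. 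The paper's case analysis is slightly more concrete about exactly which edges are inserted after each kind of request, which may be helpful to a reader tracing the running example, but your invariant-based argument is the cleaner proof.
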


\begin{proof}
    The proof proceeds by induction on the number of evaluated requests.
    Consider the (base) case when no requests have yet been made.
    A request $q_1 = (u', o, a_j)$ will not match the inserted principal-matching rule $(\interestb \comp \overline{\pi_2}, p_{\it cw})$ as no blocked interest audit edges currently exist in the system graph.
    By assumption, request $q_1$ will match a preexisting principal-matching rule with principal-matching condition $\pi_1 \comp \overline{\pi_2}$ and in doing so will match principal $p$.
    Also by assumption, principal $p$ is authorized to perform action $a_j$, therefore, the audit edge $(u',o,\audita{a_j})$ will be added to the system graph.
    Additionally, the active interest edge $(u',c,\interesta)$ will be added where $c$ represents the company the target data entity $o$ belongs to (i.e. there is a path of relations satisfying $\pi_2$ between $o$ and $c$ as required by the basic layout).
    Lastly, blocked interest edges $(u',c',\interestb)$ will be added for each company $c' \neq c$ in the same conflict of interest class; these companies are identified through the existence of edges $(c',i',m)$ in the system graph where there is also an edge $(c,i,m)$ with $i' = i$.

    Now consider the case when request $q_{x+1} = (u'',o',a_k)$ is made after $x$ requests have been successfully evaluated.
    We assume, without loss of generality, that data entity $o'$ belongs to company $c_1$, a member of conflict of interest class $i_1$.
    \begin{itemize}
        \item If user $u''$ has no active interests in any company, then request $q_{x+1}$ will not match the inserted principal-matching rule $(\interestb \comp \overline{\pi_2}, p_{\it cw})$ as no blocked interest audit edges currently exist in the system graph for $u''$.
            By assumption, request $q_1$ will match a preexisting principal-matching rule with principal-matching condition $\pi_1 \comp \overline{\pi_2}$ and in doing so will match principal $p$.
            Also by assumption, principal $p$ is authorized to perform action $a_k$, therefore, the audit edge $(u',o',\audita{a_k})$ will be added to the system graph.
            The active interest edge $(u'',c_1,\interesta)$ will be added to the system graph, as will blocked interest edges $(u'',c_y,\interestb)$ for each company $c_y \neq c_1$ who is a member of the conflict of interest class $i_1$.

        \item If user $u''$ has an active interest in company $c_1$, then request $q_{x+1}$ will not match the inserted principal-matching rule $(\interestb \comp \overline{\pi_2}, p_{\it cw})$ as an active, rather than blocked, interest audit edge exists between $u''$ and $c_1$.
            By assumption, request $q_1$ will match a preexisting principal-matching rule with principal-matching condition $\pi_1 \comp \overline{\pi_2}$ and in doing so will match principal $p$.
            Also by assumption, principal $p$ is authorized to perform action $a_k$, therefore, the audit edge $(u',o',\audita{a_k})$ will be added to the system graph.
            The active interest edge $(u'',c_1,\interesta)$ will not be added to the system graph as it already exists.
            Blocked interest edges $(u'',c_y,\interestb)$ for each company $c_y \neq c_1$ who is a member of the conflict of interest class $i_1$ will be added where they do not already exist.

        \item If user $u''$ has an active interest in company $c_2$ which is a member of the same conflict of interest class $i_1$ as $c_1$, then request $q_{x+1}$ will, in this instance, match the inserted principal-matching rule $(\interestb \comp \overline{\pi_2}, p_{\it cw})$.
            As the principal $p_{\it cw}$ applies the inserted authorization rule $(p_{\it cw},\star,\star,0)$ overrides the assumed authorization achieved through principal $p$.
            The denied decision audit edge $(u',o',\auditd{a_k})$ will be added to the system graph and no interest audit edges will be added.

        \item If user $u''$ has an active interest in company $c_3$ which is a member of a different conflict of interest class $i_2$ to $c_1$ but no active interest in any company which is a member of the conflict of interest class $i_1$ to which $c_1$ is a member, then request $q_{x+1}$ will not match the inserted principal-matching rule $(\interestb \comp \overline{\pi_2}, p_{\it cw})$ as no interest audit edge exists between $u''$ and $c_1$.
            By assumption, request $q_1$ will match a preexisting principal-matching rule with principal-matching condition $\pi_1 \comp \overline{\pi_2}$ and in doing so will match principal $p$.
            Also by assumption, principal $p$ is authorized to perform action $a_k$, therefore, the audit edge $(u',o',\audita{a_k})$ will be added to the system graph.
            The active interest edge $(u'',c_1,\interesta)$ will be added to the system graph, as will blocked interest edges $(u'',c_y,\interestb)$ for each company $c_y \neq c_1$ who is a member of the conflict of interest class $i_1$.
    \end{itemize}
\end{proof}

The basic model described above is consistent with that used by Brewer and Nash, where there is a simple and fixed relationship between users and companies (path condition $\pi_1$) and between data objects and companies (path condition $\pi_2$).
However, this approach is unnecessarily restrictive (and was chosen for ease of exposition), in that we may wish to define more complex authorization requirements between such entities.
In practice, there is no reason why multiple path conditions cannot be used between users, objects and companies, each of which is mapped to the appropriate principal.

For example, given two paths of relations between users and companies ($w \comp s$ and $w \comp p \comp s$) and two paths of relations between data entities and companies ($d$ and $f \comp d$) the principal-matching policy from our running example is modified to include both options for blocking paths and all combinations for normal authorization.
\begin{align*}
    \rho_{{\it cw}_2} = [&(\interestb \comp \overline{d}, p_{{\it cw}_2}), (\interestb \comp \overline{d} \comp \overline{f}, p_{{\it cw}_2}), \\
                                &(w \comp s \comp \overline{d}, p), (w \comp s \comp \overline{d} \comp \overline{f}, p), (w \comp p \comp s \comp \overline{d}, p), (w \comp p \comp s \comp \overline{d} \comp \overline{f}, p)]
\end{align*}


\section{Related Work}\label{sec:relatedwork}

Relationship-based access control is becoming an increasingly important alternative approach to specifying and enforcing authorization policies.
A number of models have been proposed in recent years~\cite{CarminatiFP09,ChengPS12passat,ChengPS12dbsec,CramptonS14,Fong11,ZhangAGC09}, but most have focused on access control in social networks~\cite{CarminatiFP09,ChengPS12passat,ChengPS12dbsec,Fong11,ZhangAGC09}.
In this paper, we extend the RPPM model of Crampton and Sellwood~\cite{CramptonS14} by introducing additional types of edges to support efficient request evaluation and history-based access control policies.

History-based access control, where an authorization decision is dependent (in part) on the outcome of previous requests, has been widely studied since Brewer and Nash's seminal paper on the Chinese Wall policy~\cite{BrewerN89}.
The enforcement mechanism for this policy is based on a history matrix, which records what requests have previously been allowed.
It is very natural to record such information as audit edges in the system graph and to use these edges to define and enforce history-based policies.
Fong {\em et al.} recently proposed a relationship-based model that incorporated temporal operators, enabling them to specify and enforce history-based policies~\cite{FoMeKr13}.
This work extended Fong's ReBAC model, developed in the context of social networks, and is thus unsuitable for the more generic access control applications for which the RPPM model was designed.
In particular, there is no obvious way in which it can support Chinese wall policies.

There has been some interest in recent years in reusing, recycling or caching authorization decisions at policy enforcement points in order to avoid recomputing decisions~\cite{BordersZP05,KohlerBS09,KohlerF09,WeiCB11}.
In principle, these techniques are particularly valuable in large-scale, distributed systems, providing faster decisions, and the potential to recover from failures in the communication infrastructure or failure of one of the components, in the (distributed) authorization system.
However, many of the techniques are of limited value because the correlation between access control decisions and the structure of access control policies is typically rather low.
In contrast, caching in the RPPM model has the potential to substantially speed up decision-making because a cached edge is of real value as it enables the decision-making apparatus to sidestep the expensive step of principal matching and proceed directly to evaluating the authorization policy.
Moreover, a cached edge applies to multiple requests, irrespective of whether the request has previously been evaluated, unlike many, if not all, proposals in the literature.

\section{Conclusion}\label{sec:conclusion}

The RPPM model fuses ideas from relationship-based access control (by using a labelled graph), role-based access control (in its use of principals to simplify policy specification) and Unix (by mapping a user-object relationship to a principal before determining whether a request is authorized).
This unique blend of features make it suitable for large-scale applications in which the relationships between users are a crucial factor in specifying authorization rules.

In addition to these advantages, the RPPM model is particularly suitable for recording information that may be generated in the process of making authorization decisions.
In this paper, we focus on two new types of edges.
Caching edges introduce shortcuts in the system graph indicating the principals associated with a user-object pair.
Such edges can introduce substantial efficiencies to the evaluation of decisions.
Audit edges allow for the enforcement of history-based policies, including separation of duty and Chinese wall policies.

The introduction of audit edges lays the foundation for future work supporting workflow tasks within the RPPM model.
This work may, additionally, require the model to be further extended with the introduction of stateful entities.

\bibliography{../Relix2AC}
\bibliographystyle{abbrv}

\end{document}